\definecolor{red}{RGB}{255,0,0}
\definecolor{blue}{RGB}{0,0,255}
\definecolor{green}{RGB}{0,255,0}
\newcommand {\abs}[1]  {\left\vert#1\right\vert}
\newcommand {\set}[1]  {\left\{#1\right\}}
\newcommand {\npc}     {\textsc{NP}\textrm{-complete}}
\newcommand {\nph}     {\textsc{NP}\textrm{-hard}}
\newcommand {\nohyp}   {\textsc{P}=\textsc{NP}}
\newcommand {\comment}[1] {\textcolor{red}{$\bigstar$ #1 $\bigstar$\\}}
\renewcommand{\comment}[1] {}
\newtheorem{theorem}{Theorem}
\newtheorem{lemma}[theorem]{Lemma}
\newcommand{\lb}{{\lfloor n/2 \rfloor}}
\newcommand{\gb}{{\lceil n/2 \rceil}}
\newcommand{\overbar}[1]{\mkern 1.5mu\overline{\mkern-1.5mu#1\mkern-1.5mu}\mkern 1.5mu}
\begin{document}

\begin{frontmatter}

\title{A Simple Extension of Dirac's Theorem on Hamiltonicity\tnoteref{t1}}
\tnotetext[t1]{This work is supported by TUBITAK-CNRS under grant no. 114E731.}
\author[yb]{Yasemin B\"{u}y\"{u}k\c{c}olak\corref{yasemin}}
\ead{y.buyukcolak@gtu.edu.tr}
\address[yb]{Department of Mathematics, Gebze Technical University, Kocaeli, Turkey}
\author[dg]{Didem G\"{o}z\"{u}pek}
\ead{didem.gozupek@gtu.edu.tr}
\address[dg]{Department of Computer Engineering, Gebze Technical University, Kocaeli, Turkey}
\author[yb]{Sibel \"{O}zkan}
\ead{s.ozkan@gtu.edu.tr}
\author[telhai,bu]{Mordechai Shalom\fnref{mordo}}
\fntext[mordo]{The work of this author is supported in part by the TUBITAK 2221 Programme.}
\cortext[yasemin]{Corresponding author.}
\ead{cmshalom@telhai.ac.il}
\address[telhai]{TelHai College, Upper Galilee, 12210, Israel}
\address[bu]{Department of Industrial Engineering, Bo\u{g}azi\c{c}i University, Istanbul, Turkey}

\thispagestyle{plain}
\pagestyle{plain}

\begin{abstract}
The classical Dirac theorem asserts that every graph $G$ on $n$ vertices with minimum degree $\delta(G) \ge \gb$  is Hamiltonian. The lower bound of $\gb$ on the minimum degree of a graph is tight. In this paper, we extend the classical Dirac theorem to the case where $\delta(G) \ge \lb$  by identifying the only non-Hamiltonian graph families in this case. We first present a short and simple proof. We then provide an alternative proof that is constructive and self-contained. Consequently, we provide a polynomial-time algorithm that constructs a Hamiltonian cycle, if exists, of a graph $G$ with $\delta(G) \ge \lb$, or determines that the graph is non-Hamiltonian. Finally, we present a self-contained proof for our algorithm which provides insight into the structure of Hamiltonian cycles when $\delta(G) \ge \lb$ and is promising for extending the results of this paper to the cases with smaller degree bounds.
\end{abstract}

\begin{keyword}
Hamiltonian cycle, sufficiency condition, minimum degree, Dirac theorem, extension, self-contained, constructive, graph algorithms.
\end{keyword}

\end{frontmatter}

\section{Introduction}\label{sec:intro}
A cycle passing through every vertex of a graph $G$ exactly once is called a \emph{Hamiltonian cycle} of $G$, and a graph containing a Hamiltonian cycle is called \emph{Hamiltonian}. Finding a Hamiltonian cycle in a graph is a fundamental problem in graph theory and has been widely studied. In 1972, Karp \cite{Karp72} proved that the problem of determining whether a given graph is Hamiltonian is $\npc$. Hence, finding sufficient conditions for Hamiltonicity has been an interesting problem in graph theory.

An important sufficient condition for Hamiltonicity proved in 1952 by Dirac \cite{Dirac52} is that
every graph on $n$ vertices with minimum degree at least $\lceil n/2 \rceil$ is Hamiltonian.
This lower bound on the minimum degree is tight, i.e., for every $k < \lceil n/2 \rceil$, there is a non-Hamiltonian graph with minimum degree $k$. In 1960, Ore \cite{Ore60} proved that if for all distinct nonadjacent pairs of vertices  $u$ and $v$ of a graph $G$, the sum of degrees of $u$ and $v$ is at least the order of $G$, then $G$ is Hamiltonian. In 1976, Bondy and Chv\'{a}tal \cite{BondyChvatal76} proved that a graph $G$ is Hamiltonian if and only if its closure is Hamiltonian.

Some additional sufficient conditions have been found  for special graph classes. In 1966, Nash-Williams \cite{NashWilliams66}  proved that every $k$-regular graph on $2k+1$ vertices is Hamiltonian. In 1971, Nash-Williams \cite{NashWilliams71} also  proved that a $2$-connected graph of order $n$ with independence number $\beta$ and minimum degree at least $\max\{(n+2)/3,\beta\}$ is Hamiltonian. Note that this result is stronger than the classical Dirac theorem. However, since finding the independence number of a graph is in general $\nph$, this result does not yield an efficient algorithm; i.e., the sufficiency condition cannot be tested in polynomial-time unless $\nohyp$.

The Rahman-Kaykobad condition given in \cite{RahmanKaykobad05} is a relatively new condition that helps to determine the Hamiltonicity of a given graph $G$: The condition is that for every two nonadjacent pair of vertices $u, v$ of $G$, we have $d(u) + d(v) + dist(u,v) > |V|$, where $d(v)$ denotes the degree of $v$ and $dist(u,v)$ denotes the length of a shortest path between $u$ and $v$. In 2005, Rahman and Kaykobad \cite{RahmanKaykobad05} proved that a connected graph satisfying the Rahman-Kaykobad condition has a Hamiltonian path. In 2007, Mehedy et al. \cite{Mehedy07} proved that for a graph $G$ without cut edges and cut vertices and satisfying the Rahman-Kaykobad condition, the existence of a Hamiltonian path with endpoints $u$ and $v$ and $dist(u,v) \geq 3$ implies that $G$ is Hamiltonian. In \cite{Li06} and \cite{LiLi07}, it is proven  that if $G$ is a $2$-connected graph of order $n \geq 3$ and $d(u) + d(v) \geq n - 1$ for every pair of vertices $u$ and $v$ with $dist(u,v) = 2$, then $G$ is Hamiltonian or a member of a given non-Hamiltonian graph class.

Another important property of graphs related with Hamiltonicity is toughness. It is easy to see that being $1$-tough is a necessary condition for Hamiltonicity. In 1978, Jung \cite{Jung78} proved that a $1$-tough graph $G$ on $n > 11$ vertices with  the sum of degrees of non-adjacent vertices $u$ and $v$ at least $n-4$ is Hamiltonian. In 1990,  Bauer, Morgana and Schmeichel \cite{Bauer89} provided a simple proof of Jung's theorem for graphs with more than 15 vertices.  However, in 1990 Bauer, Hakimi and Schmeichel \cite{Bauer90} proved that recognizing $1$-tough graphs is $\nph$. On the other hand, in 2002 Bauer et al. \cite{Bauer02} presented a constructive proof of Jung's theorem for graphs on more than 15 vertices. Furthermore, in 1992 H\"{a}ggkvist \cite{Hagg92} independently showed that for graphs on $n$ vertices with $\delta(G) \ge n/2 - k$, the existence of a Hamiltonian cycle can be recognized in time O($n^{5k}$) where $k \geq 0$ is any fixed integer.

In this paper, we first prove that a graph $G$ with $\delta(G)\ge \lb$ is Hamiltonian except two specific families of graphs. We first provide a simple proof using Nash-Williams theorem \cite{NashWilliams71}. We then provide an alternative proof, which is simple, constructive, and self-contained. Using the constructive nature of our proof, we propose a polynomial-time algorithm that, given a graph $G$ with $\delta(G) \ge \lb$, constructs a Hamiltonian cycle of $G$, or says that $G$ is non-Hamiltonian. The main distinction of our work from \cite{Mehedy07} is that we propose a sufficient condition for Hamiltonicity by using condition $\delta(G) \ge \lb$ and provide explicit non-Hamiltonian graph families, whereas \cite{Mehedy07} uses the Rahman-Kaykobad condition. Our proof also provides a novel insight into the pattern of vertices in a Hamiltonian cycle. We believe that this insight will play a pivotal role in extending our current results to a more general case. Notice that \cite{Li06} and \cite{LiLi07} shows the same non-Hamiltonian graph classes as in our work. However, unlike \cite{Li06} and \cite{LiLi07}, we obtain these graph classes constructively as a result of the nature of our proof. On the other hand, our main distinction from \cite{Hagg92} is that, \cite{Hagg92} shows the polynomial-time recognizability of only the existence of a Hamiltonian cycle under such a minimum degree condition, whereas we construct a Hamiltonian cycle (if exists) in addition to determining whether a Hamiltonian cycle exists when $\delta(G) \ge \lfloor n/2 \rfloor$. In other words, \cite{Hagg92} leads only to a decision algorithm, whereas we provide a construction algorithm.

Jung's theorem states that if $G$ is a $1$-tough graph on $n\ge 11$ vertices such that $d(x)+d(y) \ge n-4$ for all distinct nonadjacent vertices $x,y \in V(G)$, then $G$ is Hamiltonian. Bauer provided a constructive proof of Jung's theorem in \cite{Bauer89}. If $\delta(G) \ge \lb$, a constructive algorithm can then be designed as follows: Run the decision algorithm of H\"{a}ggkvist \cite{Hagg92} to determine whether there is a Hamiltonian cycle. If yes, then construct a Hamiltonian cycle using the constructive proof of Bauer in \cite{Bauer89}. However, this approach has three main drawbacks: Unlike this work, (i) such an approach fails to specify non-Hamiltonian graph families under the minimum degree condition $\delta(G) \ge \lb$, (ii) it is not self-contained, (iii) it does not explicitly provide a polynomial-time algorithm. Furthermore, this paper provides a shorter and simpler proof than \cite{Bauer89}. Finally, our algorithm can be used to generate all Hamiltonian cycles under the condition  $\delta(G) \ge \lb$.

\section{Preliminaries}\label{sec:prelim}
We adopt \cite{West2000}  for terminology and notation not defined here. A graph  $G = (V,E)$ is given by a pair of a vertex set $V = V(G)$ and an edge set $E = E(G)$ where $uv \in E(G)$ denotes an edge between two vertices $u$ and $v$. In this work, we consider only simple graphs, i.e., graphs without loops or multiple edges. In particular, we use $G_n$ to denote a not necessarily connected simple graph on $n$ vertices. $\abs{V(G)}$ denotes the order of $G$ and $N(v)$ denotes the neighborhood of a vertex $v$ of $G$. In addition, $\delta(G)$  denotes the minimum degree of $G$ and the \emph{distance} $dist(u,v)$ between two vertices $u$ and $v$ is the length of a shortest path joining $u$ and $v$, whereas the \emph{diameter} of $G$, denoted by $diam(G)$, is the maximum distance among all pairs of vertices of $G$. If $P=x_{0}x_{1}x_{2} \dots x_{k}$ is a path, then we say that $x_{i}$ \emph{precedes} (resp. \emph{follows}) $x_{i+1}$ (resp. $x_{i-1}$).

Given two graphs $G=(V,E)$ and $G'=(V',E')$, we define the following binary operations.
The \emph{union} $G\cup G'$ of $G$ and $G'$ is the graph obtained by the union of their vertex and edge sets, i.e., $G \cup G' = (V \cup V', E \cup E')$. When $V$ and $V'$ are disjoint, their union is referred to as the \emph{disjoint union} and denoted by $G + G'$. The \emph{join} $G \vee G'$ of $G$ and $G'$ is the disjoint union of graphs $G$ and $G'$ together with all the edges joining $V$ and $V'$. Formally,  $G \vee G' = (V \cup V', E \cup E' \cup \{V \times V'\})$. $K_n$ and $\overbar{K_n}$ denote the complete and empty graph, respectively, on $n$ vertices.

The classical Dirac theorem is as follows:
\begin{theorem}\label{di}\textbf{(Dirac \cite{Dirac52})}
If $G$ is a graph of order $n \geq 3$ such that $\delta(G) \geq  n/2$, then
$G$ is Hamiltonian.
\end{theorem}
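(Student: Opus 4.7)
The plan is to run the standard longest-path argument, which is the cleanest route and exactly the kind of template the rest of the paper will be generalizing. First I would note that the hypothesis $\delta(G) \geq n/2$ forces $G$ to be connected: any two vertices $u,v$ have $|N(u)| + |N(v)| \geq n$, so either they are adjacent or their neighborhoods intersect, giving a path of length at most $2$ between them.

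Next I would fix a longest path $P = x_0 x_1 \cdots x_k$ in $G$. By maximality of $P$, every neighbor of $x_0$ lies on $P$, and likewise every neighbor of $x_k$ lies on $P$; in particular $d(x_0), d(x_k) \leq k$, so $k \geq n/2$. The key step is to produce an index $i$ with $1 \leq i \leq k-1$ such that $x_0 x_{i+1} \in E(G)$ and $x_i x_k \in E(G)$. Define the sets $A = \{i : x_0 x_{i+1} \in E(G)\}$ and $B = \{i : x_i x_k \in E(G)\}$, both contained in $\{0, 1, \ldots, k-1\}$. Then $|A| = d(x_0) \geq n/2$ and $|B| = d(x_k) \geq n/2$, so
\[
|A| + |B| \geq n > k = |\{0, 1, \ldots, k-1\}|,
\]
and a pigeonhole argument gives some $i \in A \cap B$. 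Using this $i$, I form the cycle $C = x_0 x_1 \cdots x_i x_k x_{k-1} \cdots x_{i+1} x_0$, which uses exactly the vertices of $P$.

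Finally I would argue $C$ is Hamiltonian. Suppose for contradiction some $y \in V(G) \setminus V(C)$ exists. Since $G$ is connected, $y$ has a neighbor $z$ on $C$; breaking $C$ at $z$ and appending $y$ would yield a path strictly longer than $P$, contradicting the choice of $P$. Hence $V(C) = V(G)$ and $G$ is Hamiltonian. The one place that needs care is the pigeonhole step: one must verify that neither $A$ nor $B$ contains a forbidden index (e.g.\ ensuring we really get an interior $i$ so the reroute is well-defined), which is where the strict inequality $|A| + |B| \geq n > k$ is used. This is the main, though modest, obstacle; the rest is bookkeeping.
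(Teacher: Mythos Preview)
Your argument is the standard longest-path/pigeonhole proof of Dirac's theorem and it is correct. One small clarification: you do not need to worry about getting an ``interior'' index $i$; the construction of $C$ is perfectly well-defined at the boundary values $i=0$ and $i=k-1$ as well (these just correspond to the case $x_0x_k\in E(G)$), so no extra care is required there.

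As for comparison with the paper: the paper does not give its own proof of this statement. Dirac's theorem is stated as a classical result with a citation to \cite{Dirac52} and is used as a black box in the proofs of Theorem~\ref{thm:ExtendedDirac}. So there is nothing to compare against; your proof simply supplies what the paper takes for granted, and in fact the rotation/pigeonhole idea you use is exactly the template the paper's constructive Proof-2 and the functions \textsc{MakeTypeACycle}, \textsc{MakeTypeBCycle}, \textsc{MakeTypeCCycle} later generalize.
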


We now present the main theorem of this paper:
\begin{theorem}\label{thm:ExtendedDirac}
Let $G$ be a connected graph of order $n \geq 3$ such that $\delta(G) \geq \lb$. Then $G$ is Hamiltonian
unless $G$ is the graph $K_{\lceil n/2 \rceil} \cup K_{\lceil n/2 \rceil}$ with one common vertex or a graph  $\overbar{{K}}_{\lceil n/2 \rceil} \vee G_\lb$  for odd $n$.
\end{theorem}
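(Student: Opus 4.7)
My plan is to prove the statement by splitting on the parity of $n$, invoking Dirac's theorem in the easy case and Nash-Williams' theorem in the harder one. When $n$ is even, $\lfloor n/2\rfloor = \lceil n/2\rceil = n/2$, so the hypothesis coincides with Dirac's (Theorem~\ref{di}) and $G$ is Hamiltonian. Neither exceptional family contributes a graph on an even number of vertices: the first family has order $2\lceil n/2\rceil - 1$, forcing $n$ odd, and the second is explicitly restricted to odd $n$. So nothing needs to be excluded in the even case.

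For odd $n$, I would first handle graphs with a cut vertex $v$. Each component of $G - v$ contains a vertex whose full neighborhood is inside that component together with $v$, which forces every such component to have size at least $\delta(G) \geq (n-1)/2$. Since $|V(G-v)| = n-1$ is exactly twice this size, there are exactly two components, each of order $(n-1)/2 = \lceil n/2\rceil - 1$. Counting degrees then forces each non-cut vertex to be adjacent to all other vertices on its side of $v$, producing two cliques $K_{\lceil n/2\rceil}$ sharing only $v$. This is the first exceptional family, and it is manifestly non-Hamiltonian.

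If $G$ is $2$-connected with $n$ odd, I would appeal to Nash-Williams, using $\delta(G) \geq (n-1)/2$, which dominates $(n+2)/3$ precisely when $n \geq 7$. If in addition $\beta(G) \leq \delta(G)$, Nash-Williams yields a Hamiltonian cycle. Otherwise $\beta(G) \geq \lceil n/2\rceil$; any independent set $I$ of that size forces every vertex of $I$ to send all $\geq \lfloor n/2\rfloor$ of its edges into $V\setminus I$, a set of exactly $\lfloor n/2\rfloor$ vertices, so $G = \overbar{K}_{\lceil n/2\rceil} \vee G[V\setminus I]$, giving the second exceptional family (non-Hamiltonian, since Hamiltonicity would force $\beta(G) \leq \lfloor n/2\rfloor$). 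The residual small orders $n \in \{3,5\}$, where $(n+2)/3$ exceeds $(n-1)/2$ and Nash-Williams is not directly applicable, can be dispatched by direct inspection.

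The main obstacle I anticipate is cleanly handling the Nash-Williams regime when $\beta(G) > \delta(G)$: the independent-set argument must be pinned to odd $n$ to recover the join structure with no extraneous edges, and the exclusion of the two families as non-Hamiltonian (plus the verification of the tiny cases $n=3,5$) must be carried out explicitly rather than waved through.
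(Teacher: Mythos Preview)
Your proposal is correct and follows essentially the same route as the paper's first proof (Proof-1): reduce to odd $n$ via Dirac, classify the cut-vertex case as the shared-vertex double clique, and in the $2$-connected case invoke Nash-Williams (Lemma~\ref{lemma:nashwilliams}) for $n\ge 7$ to force $\beta(G)\ge\lceil n/2\rceil$ and hence the join structure, with $n\in\{3,5\}$ handled by inspection. The paper additionally supplies a second, self-contained constructive proof (Proof-2) that avoids Nash-Williams entirely, but your argument coincides with its Proof-1.
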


The constructive nature of our proof for Theorem \ref{thm:ExtendedDirac} given in Section \ref{sec:proofMain} yields the following result:

\begin{theorem}\label{thm:Algorithm}
Let $G$ be a graph of order $n \geq 3$ such that $\delta(G) \geq \lb$. Then
there is a polynomial-time algorithm that determines whether a Hamiltonian cycle exists and finds a Hamiltonian cycle in $G$, if such a cycle exists.
\end{theorem}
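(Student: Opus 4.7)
The plan is to treat Theorem \ref{thm:Algorithm} as an algorithmic corollary of the structural Theorem \ref{thm:ExtendedDirac}. Given a graph $G$ of order $n$ with $\delta(G)\ge\lb$, the algorithm runs in two phases: (i) test in polynomial time whether $G$ is isomorphic to one of the two exceptional non-Hamiltonian graphs named in Theorem \ref{thm:ExtendedDirac}, in which case the algorithm reports that $G$ is non-Hamiltonian; (ii) otherwise, Theorem \ref{thm:ExtendedDirac} guarantees that $G$ is Hamiltonian, and the constructive nature of its proof (the procedure developed in Section \ref{sec:proofMain}) is replayed as an algorithm that actually outputs a Hamiltonian cycle. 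Once both phases are shown to terminate in polynomial time, Theorem \ref{thm:Algorithm} follows.

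The recognition step in phase (i) is essentially a bookkeeping exercise. To detect whether $G$ is $\overbar{K}_{\gb}\vee G_{\lb}$ (only possible for odd $n$), I would collect the set $I$ of vertices $v$ whose closed neighborhood complement $V(G)\setminus(\{v\}\cup N(v))$ has size $\gb-1$; $G$ belongs to this family iff $|I|=\gb$, $I$ is independent, and every vertex of $I$ is adjacent to every vertex of $V(G)\setminus I$. To detect the ``two cliques glued at a vertex'' family, a standard biconnectivity scan finds every cut vertex; for each such candidate $v$ one checks whether $G-v$ has exactly two components, each of size $\gb-1$, and each inducing a clique together with $v$. Both tests can be carried out in $O(n^{2})$ time.

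For phase (ii), the idea is to transcribe each step of the constructive proof of Theorem \ref{thm:ExtendedDirac} as a polynomial-time subroutine. Typical ingredients are: building a Hamiltonian path by greedy extension combined with Pósa-style rotations (each rotation is chosen by scanning the neighborhood of a path endpoint, which is $O(n)$), and then closing the path into a cycle by further local surgery. To guarantee termination I would associate a monotone progress measure with the current structure---such as the length of the longest path or, when closing a Hamiltonian path, the number of neighbors of an endpoint located in a useful position---and argue, using the minimum-degree hypothesis and the fact that $G$ is neither of the exceptional graphs, that each step either increases this measure or exhibits a Hamiltonian cycle. Since the measure is bounded by a polynomial in $n$ and each step takes $O(n^{2})$ time, the total running time is polynomial.

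The substantive obstacle is phase (ii): one must verify that every case distinction in the constructive proof of Theorem \ref{thm:ExtendedDirac} is effective, i.e.\ can be decided by examining only $O(\mathrm{poly}(n))$ pieces of local information, and that the branch in which progress seems impossible forces $G$ into one of the two exceptional families ruled out by phase (i). Once this correspondence between the case analysis and explicit polynomial-time tests is set up, the complexity bound itself is routine, and the correctness is inherited directly from Theorem \ref{thm:ExtendedDirac}.
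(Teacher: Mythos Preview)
Your proposal is correct and follows essentially the same two-phase scheme as the paper: first rule out the two exceptional families by a polynomial-time structural test, then iteratively grow a maximal path, close it to a cycle on the same vertex set, and extend. The paper makes phase (ii) concrete by specifying three explicit cycle-closing constructions (Types A, B, C) and proving a lemma that if all three fail on a maximal path then $G$ must have a cut vertex or the large independent set characterizing $\overbar{K}_{\gb}\vee G_{\lb}$; your ``P\'osa-style rotations and local surgery with a monotone progress measure'' is exactly this machinery in outline.
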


We provide the proof of Theorem \ref{thm:Algorithm} in Section \ref{sec:proofAlg}.

\section{Proofs of Theorem \ref{thm:ExtendedDirac}}\label{sec:proofMain}
In this section, we prove Theorem \ref{thm:ExtendedDirac} of this paper, which extends  the classical Dirac theorem. Using Nash-William's theorem \cite{NashWilliams71}, we first provide a simple proof.
\begin{lemma}
\cite{NashWilliams71} Let $G$ be a 2-connected graph of order $n$ with independence number $\beta(G)$ and minimum degree $\delta(G)$. If $\delta(G) \ge \max((n+2)/3, \beta(G))$, then $G$ is Hamiltonian. \label{lemma:nashwilliams}
\end{lemma}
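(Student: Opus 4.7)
The plan is to argue by contradiction along the classical longest-cycle route. Suppose $G$ satisfies the hypotheses of Lemma~\ref{lemma:nashwilliams} but admits no Hamiltonian cycle. Let $C$ be a longest cycle in $G$, oriented cyclically, and write $D := V(G) \setminus V(C) \neq \emptyset$. For $x \in V(C)$, let $x^+$ denote its successor on $C$.

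First I would establish the standard \emph{rotation observation}: for every $v \in D$, the set $I_v := \{v\} \cup \{x^+ : x \in N(v) \cap V(C)\}$ is independent in $G$. An edge $vx^+$ with $x \in N(v) \cap V(C)$ would let us replace $xx^+$ on $C$ by the detour $x v x^+$, producing a cycle longer than $C$; and an edge $x^+ y^+$ for distinct $x, y \in N(v) \cap V(C)$ would allow us to reroute $C$ through $v$ (cut the arcs $xx^+$ and $yy^+$, insert $v$ and the chord $x^+ y^+$) to obtain a strictly longer cycle. Since any independent set in $G$ has size at most $\beta(G) \leq \delta(G)$, this forces $|N(v) \cap V(C)| \leq \delta(G) - 1$; in particular every $v \in D$ has at least one neighbour inside $D$, so no component of $G[D]$ is a single vertex.

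Next I would pick a component $H$ of $G[D]$, a longest path $P = v_0 v_1 \cdots v_p$ in $H$ (with $p \geq 1$ by the previous step), and, invoking 2-connectivity via Menger's theorem, two internally disjoint paths from $V(H)$ to distinct vertices of $V(C)$. Because $P$ is longest in $H$, the endpoint $v_0$ satisfies $N(v_0) \subseteq V(P) \cup V(C)$, so $|N(v_0) \cap V(C)| \geq \delta(G) - p$; the symmetric bound holds for $v_p$. Applying the rotation observation to both $v_0$ and $v_p$ yields two large independent sets spread along $C$, and the main task is to glue them together with appropriate vertices of $P$ into a single independent set of size strictly greater than $\beta(G)$.

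The hypothesis $\delta(G) \geq (n+2)/3$ enters to make this gluing quantitative: since the longest cycle satisfies $|V(C)| \geq 2\delta(G)$, we obtain $|D| \leq n - 2\delta(G) \leq (n-4)/3$, which bounds $p$ from above and prevents the successor sets of $v_0$ and $v_p$ from overlapping too much on $C$. The main obstacle I anticipate is precisely this final counting step: verifying, with the help of the two disjoint attachment paths supplied by 2-connectivity, that after discarding the few successors which the rotations around $v_0$ and $v_p$ force to coincide, enough remaining successor vertices are pairwise non-adjacent to push the resulting independent set past the bound $\beta(G)$. A secondary subtlety is the boundary case $|V(H)| = 2$, where the path $P$ degenerates and the estimates have to be re-examined directly.
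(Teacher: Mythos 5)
The paper does not prove Lemma~\ref{lemma:nashwilliams} at all: it is quoted from Nash-Williams \cite{NashWilliams71} and used as a black box in Proof-1 of Theorem~\ref{thm:ExtendedDirac}, so there is no in-paper argument to measure your attempt against. Judged on its own terms, your proposal assembles the right opening ingredients of the classical argument --- a longest cycle $C$, the rotation observation that $\{v\}\cup\{x^{+}: x\in N(v)\cap V(C)\}$ is independent, the consequences $|N(v)\cap V(C)|\le \delta(G)-1$ and $N(v)\cap D\neq\emptyset$, and the bound $|D|\le (n-4)/3$ --- and these individual claims are correct. But the proof is not complete, and what is missing is precisely the heart of the theorem.

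Two concrete gaps. First, the inequality $|V(C)|\ge 2\delta(G)$ for a longest cycle of a $2$-connected non-Hamiltonian graph is itself a nontrivial theorem (Dirac's circumference bound); you invoke it without proof or justification. Second, and more seriously, the final step --- producing an independent set of size strictly greater than $\beta(G)$ --- is exactly what you defer as ``the main task'' and ``the main obstacle,'' and the estimates you have set up do not suffice to close it. From $|N(v_0)\cap V(C)|\ge \delta(G)-p$ and $|N(v_p)\cap V(C)|\ge \delta(G)-p$ one can only guarantee an independent set of size $|N_C(v_0)\cup N_C(v_p)|+1\ge \delta(G)-p+1$, since the two neighbourhoods may coincide entirely; this is at most $\delta(G)\ge\beta(G)$ whenever $p\ge 1$, and $p\ge 1$ is forced by your own first step. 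So no contradiction follows from the counting as sketched. Closing the argument requires a genuinely finer analysis --- for instance working with the full attachment set $X=N_C(V(H))$ of a component $H$ of $G-V(C)$, using the connectivity of $H$ to show that $X^{+}$ is independent and has no neighbours in $H$, and then examining the segments of $C$ between consecutive attachment vertices to manufacture the additional independent vertices --- none of which appears in the proposal. As it stands, you have an accurate outline of the standard setup with the decisive step still to be supplied.
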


\begin{proof} [Proof-1 of Theorem \ref{thm:ExtendedDirac}]
Let $G$ be a non-Hamiltonian graph on $n$ vertices with $\delta(G) \geq \lb$. Since for even $n$, non-Hamiltonicity would contradict Dirac's theorem, $n$ must be odd. Let $n=2r+1$ where $r \in \mathbb{Z}^+$.

First, consider the case that $G$ is not 2-connected, and consider a cut vertex $v$. Let $G[V(G)\setminus \{v\}]$ have $k$ connected components $G_i$ where $1 \leq i \leq k$. Since $\delta(G) \ge \lb = r$, we have $\abs{V(G_i)} - 1 \geq \delta(G_i) \geq r-1$, thus $\abs{V(G_i)} \geq r$ for $1 \leq i \leq k$. Since $n=2r+1$, we have $k = 2$ and $\abs{V(G_1)}=\abs{V(G_2)}=r$. Therefore, the only 1-connected graph $G$ with $\delta(G) \ge \lb$ is the graph $K_{\lceil n/2 \rceil} \cup K_{\lceil n/2 \rceil}$ with one common vertex.

Now consider the case that $G$ is 2-connected. If $n \geq 7$, $\delta(G) \geq (n-1)/2 \geq (n+2)/3$. If $\delta(G) \ge \beta(G)$, then $\delta(G) \ge \max((n+2)/3, \beta(G))$ and $G$ is Hamiltonian due to Lemma \ref{lemma:nashwilliams}, contradiction. Therefore, $\beta (G) > \delta(G) \ge (n-1)/2$ and hence $\beta(G) \ge (n+1)/2$. In other words, there is an independent set $S$ with size $(n+1)/2$. Since $\delta(G)\ge (n-1)/2$, each vertex in the independent set $S$ has to be adjacent to each vertex in $V(G)\setminus S$, which results in the graph  $\overbar{{K}}_{\lceil n/2 \rceil} \vee G_\lb$.

For $n=3$, the only 2-connected graph is a cycle on $3$ vertices, which is Hamiltonian. For $n=5$ consider the minimal graphs $G$ with $\delta(G) \geq 2$, where by minimality we imply that removal of an edge violates the degree condition $\delta(G) \geq 2$. The vertices $U$ of degree more than $2$ in $G$ constitute an independent set because otherwise it would not be minimal. Let $\overbar{U}=V(G) \setminus U$. Since $n=5$ and $\delta(G)\ge 2$, We have $\abs{\overbar{U}} \geq 3$, i.e. $\abs{U} \leq 2$. If $U=\emptyset$, then $G$ is a cycle, a contradiction. If $U=\set{u}$ and $d(u)=4$, the degree sequence of $G[\overbar{U}]$ is $(1,1,1,1)$ and $u$ is a cut vertex, contradicting 2-connectivity. If $U=\set{u}$ and $d(u)=3$ the degree sequence of $G[\overbar{U}]$ is $(2,1,1,1)$, a contradiction since such a graph with odd degree sum does not exist. If $\abs{U}=2$, $G$ contains a $K_{2,3}$ with possible additional edges between vertices of $\overbar{U}$. If there are no such additional edges, $G$ is a $K_{2,3}$ and satisfies the claim. Otherwise, if there is such an additional edge, one can build a Hamiltonian path of $K_{2,3}$ and close it to a cycle using this edge, contradiction.
\end{proof}

We now present not only simple, but also self-contained and constructive proof. Our proof is inspired by the proof of the following theorem in \cite{NashWilliams66}.

\begin{lemma}
\cite{NashWilliams66} Every $k$-regular graph on $2k +1$ vertices is Hamiltonian.
 \label{lemma:nashwilliams2}
\end{lemma}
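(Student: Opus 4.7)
The plan is a longest-path/rotation argument that exploits both the degree condition and the full $k$-regularity. Let $G$ be a $k$-regular graph on $n = 2k+1$ vertices; by the handshake lemma $k$ must be even, so in particular $k \ge 2$. First I would observe that $G$ is connected: otherwise any component would contain at least $k+1$ vertices (since it is $k$-regular inside itself), giving $n \ge 2k+2$, a contradiction.

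Next I would produce a Hamiltonian path. Take a longest path $P = v_0 v_1 \cdots v_\ell$; by maximality, both $N(v_0)$ and $N(v_\ell)$ lie in $V(P)$. Writing $A = N(v_0)$ and $A' = \{v_{j-1} : v_j \in A\}$, the two $k$-sets $A'$ and $N(v_\ell)$ sit inside $\{v_0,\ldots,v_{\ell-1}\}$. If $A' \cap N(v_\ell) \ne \emptyset$, the standard pivot $v_0 v_j v_{j+1}\cdots v_\ell v_{j-1}\cdots v_1 v_0$ yields a cycle using exactly $V(P)$; if not, $|A'|+|N(v_\ell)| \le \ell$ forces $\ell \ge 2k$, so $P$ is already Hamiltonian. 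In the cycle case, if $V(P) \subsetneq V(G)$ then connectivity produces an edge from the cycle to $V(G)\setminus V(P)$, and opening the cycle at that vertex extends $P$ by one, contradicting maximality. Either way we obtain a Hamiltonian path, which I relabel $v_0 v_1 \cdots v_{2k}$.

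The third step is to close this Hamiltonian path into a Hamiltonian cycle, and this is the main obstacle. The same $A'$-versus-$N(v_{2k})$ analysis now lives in $\{v_0,\ldots,v_{2k-1}\}$, a set of size exactly $2k$, so the two $k$-sets either intersect (giving an immediate Hamiltonian cycle via the same pivot) or partition the set. In the partition case the Ore-style pigeonhole is tight because $k = (n-1)/2$ is one short of Dirac's threshold, so no further slack is available from $\delta(G)$ alone. I would handle it by a P\'osa-style rotation: each $v_j \in N(v_0)$ produces an alternative Hamiltonian path $v_{j-1} v_{j-2} \cdots v_0 v_j \cdots v_{2k}$ with a new starting endpoint, and iterating builds a growing set $R$ of possible starting endpoints of Hamiltonian paths ending at $v_{2k}$. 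Assuming no rotation admits a closure, $R$ stays disjoint from $N(v_{2k})$ inside $\{v_0,\ldots,v_{2k-1}\}$; using the full $k$-regularity (together with the oddness of $n$, equivalently the evenness of $k$) one shows that $|R|$ must eventually exceed the available room, producing the contradiction.

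The hard part is precisely this closing step, because the partition configuration cannot be ruled out by pigeonhole alone: the example of two copies of $K_{k+1}$ glued at a single vertex has minimum degree $k$ but no Hamiltonian cycle, so the argument must use regularity essentially. This is also why the proof of the main Theorem~\ref{thm:ExtendedDirac} must explicitly exclude precisely this graph as an exception.
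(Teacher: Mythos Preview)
The paper does not actually prove this lemma; it is quoted from \cite{NashWilliams66} and only its \emph{method} is reused in Proof-2 of Theorem~\ref{thm:ExtendedDirac}. So there is no line-by-line comparison to make, but your outline can be measured against the Nash--Williams argument that Proof-2 mirrors.

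Your first two steps (connectedness, and the longest-path/crossing argument producing a Hamiltonian path) are fine. The gap is in the closing step. You propose P\'osa rotation and then write ``using the full $k$-regularity \ldots one shows that $|R|$ must eventually exceed the available room.'' That sentence is the whole difficulty, and you have not carried it out. In fact the opposite happens: from $R\cap N(v_{2k})=\emptyset$ you get $|R|\le k$, while a single round of rotation already gives $R\supseteq\{v_{j-1}:v_j\in N(v_0)\}$, which has size $k$; hence $|R|=k$ exactly, and an edge count between $R$ and its complement (using $k$ even) is consistent, not contradictory. So P\'osa rotation by itself does not overflow anything here, and the parity remark does not rescue it. You correctly identify this as ``the hard part'' but then do not solve it.

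The argument the paper borrows from Nash--Williams is different and is what you are missing. In the partition case either $N(v_0)\cup N(v_{2k})=V\setminus\{v_0,v_{2k}\}$, which forces the two-cliques-glued-at-a-vertex graph and gives the glue vertex degree $2k$, contradicting $k$-regularity; or there is some $y_0\notin N(v_0)\cup N(v_{2k})$, and then (exactly as in Case~2 of Proof-2) one reroutes to a $2k$-cycle on $V\setminus\{y_0\}$, observes that $y_0$ must hit every second vertex of that cycle, swaps $y_0$ with each non-neighbour to conclude that all $k+1$ ``even'' vertices share the neighbourhood $\{y_1,y_3,\ldots,y_{2k-1}\}$, and hence each ``odd'' vertex has degree at least $k+1$, again contradicting $k$-regularity. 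That swap-on-a-$2k$-cycle step, not rotation of the Hamiltonian path, is the idea that actually exploits regularity.
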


We now give the proof for Theorem \ref{thm:ExtendedDirac} as follows:
\begin{proof}[Proof-2 of Theorem \ref{thm:ExtendedDirac}]
For $n = 2r$  where $r \in \mathbb{Z}^+$, the result holds by Theorem \ref{di}. Hence, we assume that $n = 2r + 1$ and $\delta(G) \geq r$.
First, we consider the graph $G'$ obtained by adding a new vertex $y$ to $G$ and connecting it to all other vertices. The graph  $G'$ has $|V(G')| = 2r+2$ vertices and minimum degree $\delta(G') \geq r+1$. By Theorem \ref{di}, $G'$ has a Hamiltonian cycle $C$. The path $P$ obtained by the removal of $y$ from $C$ is a Hamiltonian path of $G$. Let $P=x_{0}x_{1} \dots x_{2r}$.

Suppose $G$ has no Hamiltonian cycle. That is, $x_0$ and $x_{2r}$ are not adjacent.
Then, we observe the following facts:
\begin{enumerate}
  \item \textit{If $x_0$ is adjacent to $x_i$, then $x_{2r}$ is not adjacent to $x_{i-1}$.}
Otherwise, the closed trail  $x_0x_1 \dots x_{i-1}x_{2r}x_{2r-1}x_{2r-2} \dots x_{i}x_0$ is a Hamiltonian cycle. \label{fact1}
  \item \textit{If $x_0$ is not adjacent to $x_i$, then $x_{2r}$ is adjacent to $x_{i-1}$.}
By the first fact, $N(x_{2r}) \subseteq X = \set{x_{i-1}|x_i \notin N(x_0), i \geq 1}$. Since $\abs{X}=r$ and $d(x_{2r}) = r$, then $N(x_{2r})=X$.
  \item \textit{Every pair of non-adjacent vertices  $x_i$ and $x_j$, where $0 \leq i,j \leq 2r$, has at least one common neighbor.} This is because $N(x_i) \subseteq V(G) \setminus \set{x_i, x_j}$, $N(x_j) \subseteq V(G) \setminus \set{x_i, x_j}$, $d(x_i) \geq r$, and $d(x_j) \geq r$. Note that this implies $diam(G) = 2$.
\end{enumerate}

Assume that $d(x_0) > r$. By Fact \ref{fact1}, the vertex  $x_{2r}$ cannot be adjacent to at least $r+2$ vertices of $G$ including itself. Since $n = 2r + 1$, it implies that $d(x_{2r}) \leq r-1$, contradicting with $\delta(G) \geq r$. Similarly, if $d(x_{2r}) > r$, then $d(x_{0}) \leq r-1$ by the symmetry of the Hamiltonian path $P$, again contradiction. Therefore, $d(x_0) = d(x_{2r}) = r$.

We now  consider two disjoint and complementary cases:
\begin{enumerate}
  \item {\textbf{\underline{$N(x_0) \cup N(x_{2r}) = V(G) \setminus \set{x_0, x_{2r}}$}}:} \label{case1}
  By this assumption and  Fact 3, $x_0$ and $x_{2r}$ have exactly one common neighbor $x_k$. Then $x_{k-1}$ is not adjacent to $x_{2r}$ but adjacent to $x_0$. Proceeding in the same way, we conclude that $N(x_0) = \set{x_1, \ldots, x_k}$ and $N(x_{2r}) = \set{x_k,\ldots,x_{2r-1}}$. Since $d(x_0)=d(x_{2r})=r$, we conclude that $k = r$. Let $i \in [r+1,2r-1]$ and $i_0 \in [0, r-1]$. If $x_{i_0}x_i \in E$, the cycle $x_{i_0}x_{i_0-1} \dots x_0x_{i_0+1}x_{i_0+2} \dots x_{i-1}x_{2r}x_{2r-1} \dots x_{i}x_{i_0}$ is a Hamiltonian cycle of $G$. Therefore, for every $i \in [r+1,2r-1]$ and every $i_0 \in [0,r-1]$, $x_i$ and $x_{i_0}$ are non-adjacent. Then $G = K_{\lceil n/2 \rceil} \cup K_{\lceil n/2 \rceil}$ with one common vertex $x_r$. Note that $G$ is not Hamiltonian since it contains a cut vertex, namely $x_r$.

  \item {\textbf{\underline{$N(x_0) \cup N(x_{2r}) \neq V(G) \setminus \set{x_0, x_{2r}}$}}:}\label{case2}
  Then there is an $i_0 \in [2, 2r-2]$ such that $x_{i_0 + 1}$ is adjacent to $x_{0}$, but $x_{i_0}$ is not. By Fact 2, $x_{i_0 - 1}$ is adjacent to $x_{2r}$.
  Hence, we have a $(2r)$-cycle $x_{i_0-1}x_{i_0-2} \dots x_0x_{i_0+1}x_{i_0+2} \dots x_{2r}x_{i_0-1}$ which does not contain $x_{i_0}$, say $C$. We rename the vertices of $C$ such that $y_1y_2 \dots y_{2r}$ and $y_0=x_{i_0}$. If $y_0$ is adjacent to two consecutive vertices of $C$, then $G$ is Hamiltonian. Therefore, $y_0$ is not adjacent to two consecutive vertices of $C$. Combining this with the fact that $d_{y_0} \geq r$ we conclude that $d(y_0)=r$ and $y_0$ is adjacent to every second vertex of $C$. Without loss of generality, let $N(y_0)=\set{y_1, y_3, ... , y_{2r-1}}$.  Observe that by replacing $y_{2i}$ by $y_0$ for some $i \in [1,r]$ we obtain another cycle with $2r$ vertices. Then, by the same argument $N(y_{2i})=\set{y_1, y_3, ... , y_{2r-1}}$ for every $i \in [0,2]$.
  Hence, $G=\overbar{K}_{\lceil n/2 \rceil} \vee G_\lb$ where the vertices with even index form the empty graph $\overbar{{K}}_{\lceil n/2 \rceil}$ and the vertices with odd index form a not necessarily connected graph $G_\lb$.   Notice that $G$ is not Hamiltonian since it contains an independent set with more than half of the vertices, namely $\set{y_0,\ldots,y_{2r}}$.
  \end{enumerate}
\end{proof}

Note that we have obtained the non-Hamiltonian graph classes constructively.

In the following section, by using the proof of Theorem \ref{thm:ExtendedDirac}, we propose a polynomial-time algorithm for finding a Hamiltonian cycle of a given graph $G$ with order $n$ and $\delta(G) \ge \lb$, if it exists, or returns ``none" if it does not exist.

\section{Proof of Theorem \ref{thm:Algorithm}}\label{sec:proofAlg}
\newcommand{\alg}{\textsc{FindHamiltonian}}
\newcommand{\makeCycle}{\textsc{MakeCycle}}
\newcommand{\makeCycleA}{\textsc{MakeTypeACycle}}
\newcommand{\makeCycleB}{\textsc{MakeTypeBCycle}}
\newcommand{\makeCycleC}{\textsc{MakeTypeCCycle}}
\alglanguage{pseudocode}

In this section we present Algorithm $\alg$ that, given a graph $G$, returns either a Hamiltonian cycle $C$ or ``NONE". Although $\alg$ may in general return ``NONE" for a Hamiltonian graph $G$, we will show that this will not happen if $\delta(G) \geq \lb$. $\alg$, whose pseudo code is given in Algorithm \ref{alg:FindHamiltonian}, first tests $G$ for the two exceptional graph families mentioned in Theorem \ref{thm:ExtendedDirac}. Once $G$ passes the tests, the algorithm first builds a maximal path by starting with an edge and then extending it in both directions as long as this is possible. After this stage, the algorithm tries to find a larger path by closing the path to a cycle and then adding to it a new vertex and opening it back to a path. Once the path is closed to a cycle, it is clearly possible to extend it to a larger path, since $G$ is connected. It remains to show that under the conditions of Theorem \ref{thm:ExtendedDirac}, i.e. $\delta(G) \geq \lb$, the algorithm will always be able to construct a cycle from the vertices of $P$. This is done in function $\makeCycle$, which in turn tries three different constructions using the functions $\makeCycleA$ (see Figure \ref{fig:TypeACycle}), $\makeCycleB$ (see Figure \ref{fig:TypeBCycle}), and $\makeCycleC$ (see Figure \ref{fig:TypeCCycle}).

Note that Algorithm \ref{alg:FindHamiltonian} is polynomial since i) Lines 1-5 can be computed in polynomial time, ii) constructing a maximal path in Lines 7-8, constructing a cycle in Line 9, and obtaining a larger maximal path in Lines 10-13 can be done in polynomial time, iii) the loop in Lines 6-14 iterates at most $n$ times. Therefore, it is sufficient to prove the following lemma.

\begin{lemma}
Let $\abs{V(G)}=n$, $\delta(G) \geq \lb$ and $P$ be a maximal path of $G$. If function $\makeCycle$ returns NONE, then either $G$ has a cut vertex or an independent set $I$ with more than $n/2$ vertices constituting a connected component of $\bar{G}$.
\end{lemma}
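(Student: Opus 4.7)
The plan is to parallel Proof-2 of Theorem \ref{thm:ExtendedDirac}, reading the failures of $\makeCycleA$, $\makeCycleB$, and $\makeCycleC$ as progressively stronger structural constraints on $G$. Let $P = x_0 x_1 \dots x_k$ denote the maximal path handed to $\makeCycle$. Since $P$ cannot be extended, $N(x_0) \cup \set{x_0} \subseteq V(P)$ and $N(x_k) \cup \set{x_k} \subseteq V(P)$; in particular $d(x_0), d(x_k) \le k$, and the hypothesis $\delta(G) \ge \lb$ yields $k \ge \lb$.

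I would first exploit the failure of $\makeCycleA$, which I expect to implement the standard rotation: whenever an index $i \in [0,k-1]$ satisfies $x_0 x_{i+1} \in E(G)$ and $x_k x_i \in E(G)$, the routine returns the cycle $x_0 x_1 \dots x_i x_k x_{k-1} \dots x_{i+1} x_0$. Failure of $\makeCycleA$ means that for every such $i$ at most one of the two edges is present. Summing indicators over $i$ gives $d(x_0) + d(x_k) \le k$, and combined with $\delta(G) \ge \lb$ this forces $k \ge 2\lb$. For even $n$ this contradicts $|V(P)| \le n$; hence $n$ is odd, say $n = 2r+1$, $P$ is a Hamiltonian path, $d(x_0) = d(x_k) = r$, and for each $i \in [0,k-1]$ \emph{exactly} one of $x_{i+1} \in N(x_0)$, $x_i \in N(x_k)$ holds.

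I would then split into two subcases according to whether $N(x_0) \cup N(x_k) = V(G) \setminus \set{x_0, x_k}$. In the covering subcase, the alternation above forces $N(x_0) = \set{x_1, \dots, x_r}$ and $N(x_k) = \set{x_r, \dots, x_{k-1}}$, with $x_r$ the unique common neighbor. I expect $\makeCycleB$ to search for an edge $x_{i_0} x_i$ with $i_0 \in [0, r-1]$ and $i \in [r+1, 2r-1]$ which, as in Proof-2, yields a Hamiltonian cycle. Failure of $\makeCycleB$ means no such edge exists; combined with $d(x_i) \ge r$ and $x_i x_k \notin E(G)$ for $i < r$, every vertex in $\set{x_0, \dots, x_{r-1}}$ has its entire neighborhood inside $\set{x_0, \dots, x_r}$, so $\set{x_0, \dots, x_r}$ and symmetrically $\set{x_r, \dots, x_k}$ induce cliques sharing only $x_r$, making $x_r$ a cut vertex. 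In the non-covering subcase, some $x_{i_0}$ belongs to neither neighborhood; the alternation guarantees $x_0 x_{i_0+1}, x_k x_{i_0-1} \in E(G)$, so a $k$-cycle $C$ on $V(G) \setminus \set{x_{i_0}}$ is available. I expect $\makeCycleC$ to attempt to insert $x_{i_0}$ into $C$ wherever it has two consecutive neighbors, and iteratively to repeat with $x_{i_0}$ swapped against each vertex of $C$. If every such attempt fails, the Nash-Williams-style parity argument of Proof-2 shows $d(x_{i_0}) = r$ and $x_{i_0}$ is adjacent exactly to the odd-position vertices of $C$, and likewise for every even-position vertex when swapped in. This yields an independent set $I$ of size $r+1 > n/2$ such that each vertex of $I$ is adjacent in $G$ to every vertex of $V(G) \setminus I$, i.e., $I$ is a connected component of $\bar{G}$.

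The principal obstacle is matching the exact algorithmic behavior of $\makeCycleA$, $\makeCycleB$, and $\makeCycleC$ (as depicted in Figures \ref{fig:TypeACycle}--\ref{fig:TypeCCycle}) with the three cycle constructions of Proof-2; once this translation is pinned down, the structural conclusions propagate line by line from that proof. A secondary subtlety is verifying in the non-covering subcase that $I$ is not merely a large independent set but genuinely a connected component of $\bar{G}$—this step uses the full complete-bipartite join extracted from iterating the parity argument over every choice of swapped-in even-index vertex.
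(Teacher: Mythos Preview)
Your high-level strategy---reading the failures of the three subroutines as the three cycle constructions in Proof-2 of Theorem~\ref{thm:ExtendedDirac}---does not match the actual algorithms, and the mismatch is structural rather than cosmetic. From Algorithm~\ref{alg:MakeTypeACycle}, $\makeCycleA$ tests a \emph{three}-edge condition $x_0x_{i+1},\, x_ix_{j+1},\, x_jx_k \in E(G)$; the standard rotation is only the special case $j=k-1$. The cross-edge Hamiltonian cycle that you attribute to $\makeCycleB$ in the covering subcase is in fact detected by $\makeCycleA$ (take $i=i_0$, $j$ one below the other endpoint of the cross edge), while $\makeCycleB$ (Algorithm~\ref{alg:MakeTypeBCycle}) is precisely the ``insert $x_i$ between two consecutive neighbours'' move you assign to $\makeCycleC$. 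In your covering subcase the precondition of $\makeCycleB$ (namely $x_{i-1}\in N(x_k)$ and $x_{i+1}\in N(x_0)$) is never satisfied, so its failure gives no information and your argument there collapses. More seriously, the iterative swap of Proof-2 (replace $y_{2i}$ by $y_0$ in the $2r$-cycle and re-run the parity count for each even vertex in turn) is not performed by any of the three subroutines: $\makeCycleB$ works only on the fixed path $P$, so its failure constrains only those $x_i$ already satisfying the precondition on $P$, not every even-indexed vertex of the auxiliary $2r$-cycle. Your non-covering subcase therefore does not close either.

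The paper's proof bypasses the swap entirely. It partitions $\set{x_1,\dots,x_{k-1}}$ into $A_0,A_k,A_{0k},A_{\overbar{0k}}$ by adjacency to the endpoints, extracts forbidden consecutive patterns from the $j=k-1$ specialisation of $\makeCycleA$, and uses a counting argument ($\abs{A_{\overbar{0k}}}\ge\abs{A_{0k}}-1$) to force $k=n-1$, $n$ odd, and a precise regular-expression pattern for $P$. Failure of $\makeCycleB$ then gives $N(x_i)=A_{odd}$ for every $x_i\in A_{\overbar{0k}}$; the full three-edge $\makeCycleA$ supplies additional forbidden adjacent pairs that refine the pattern; and failure of $\makeCycleC$ (Algorithm~\ref{alg:MakeTypeCCycle}, which has no analogue in Proof-2) eliminates the residual $A_0$ and $A_k$ blocks. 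The cut-vertex outcome arises when $\abs{A_{\overbar{0k}}}=0$ rather than from a separate ``covering'' case, and the independent-set outcome when $\abs{A_{\overbar{0k}}}>0$. To repair your approach you would have to replace the iterated swap by this pattern analysis, which is where the real work of the lemma lies.
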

\begin{figure}[H]
  \centering
  \hspace{2cm}
  \includegraphics[scale=0.4]{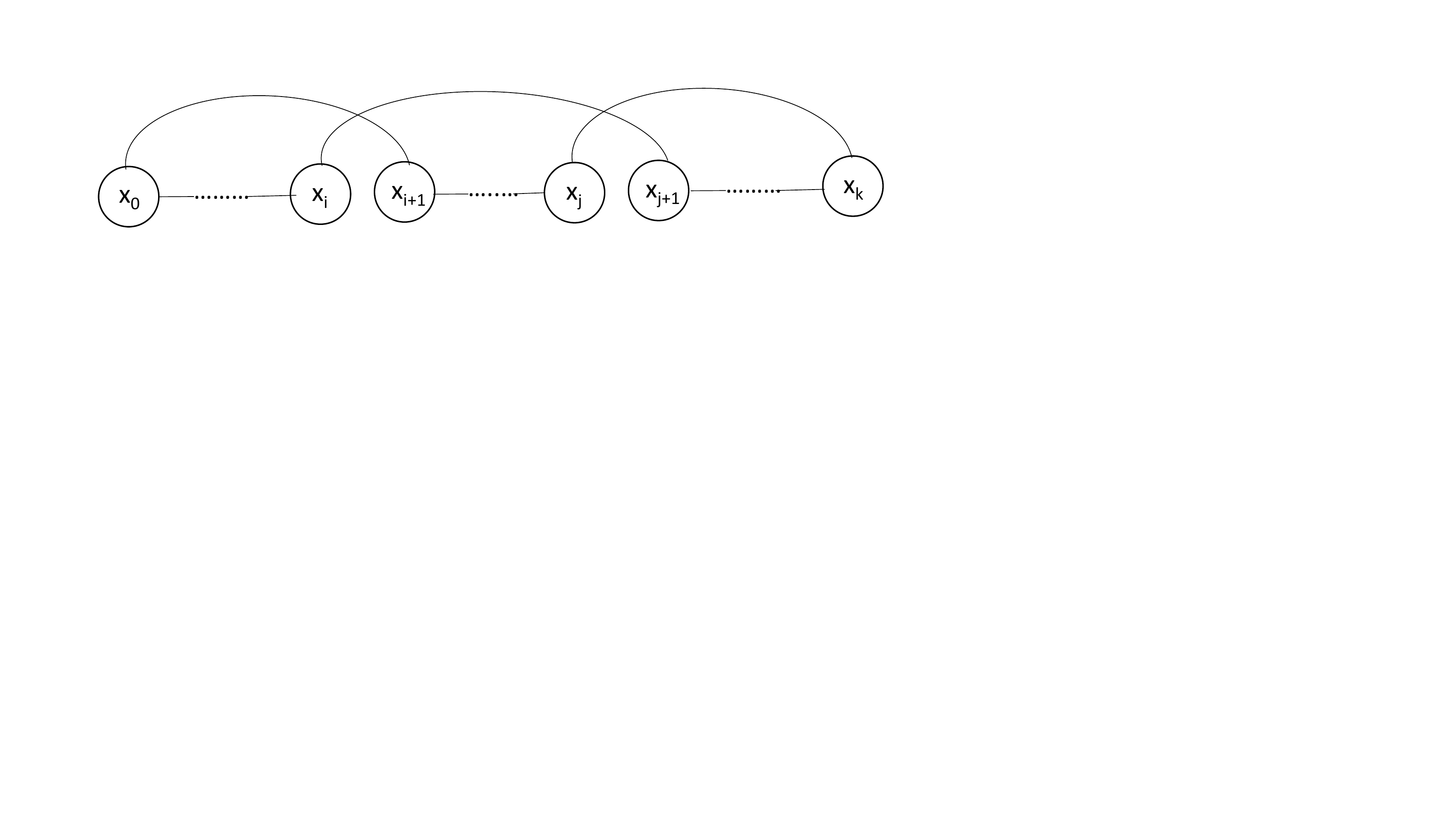}}
  \vspace{-5.5cm}
  \caption{The cycles detected by $\makeCycleA$}{\label{fig:TypeACycle}
\end{figure}
\begin{figure}[H]
  \centering
  \includegraphics[scale=0.4]{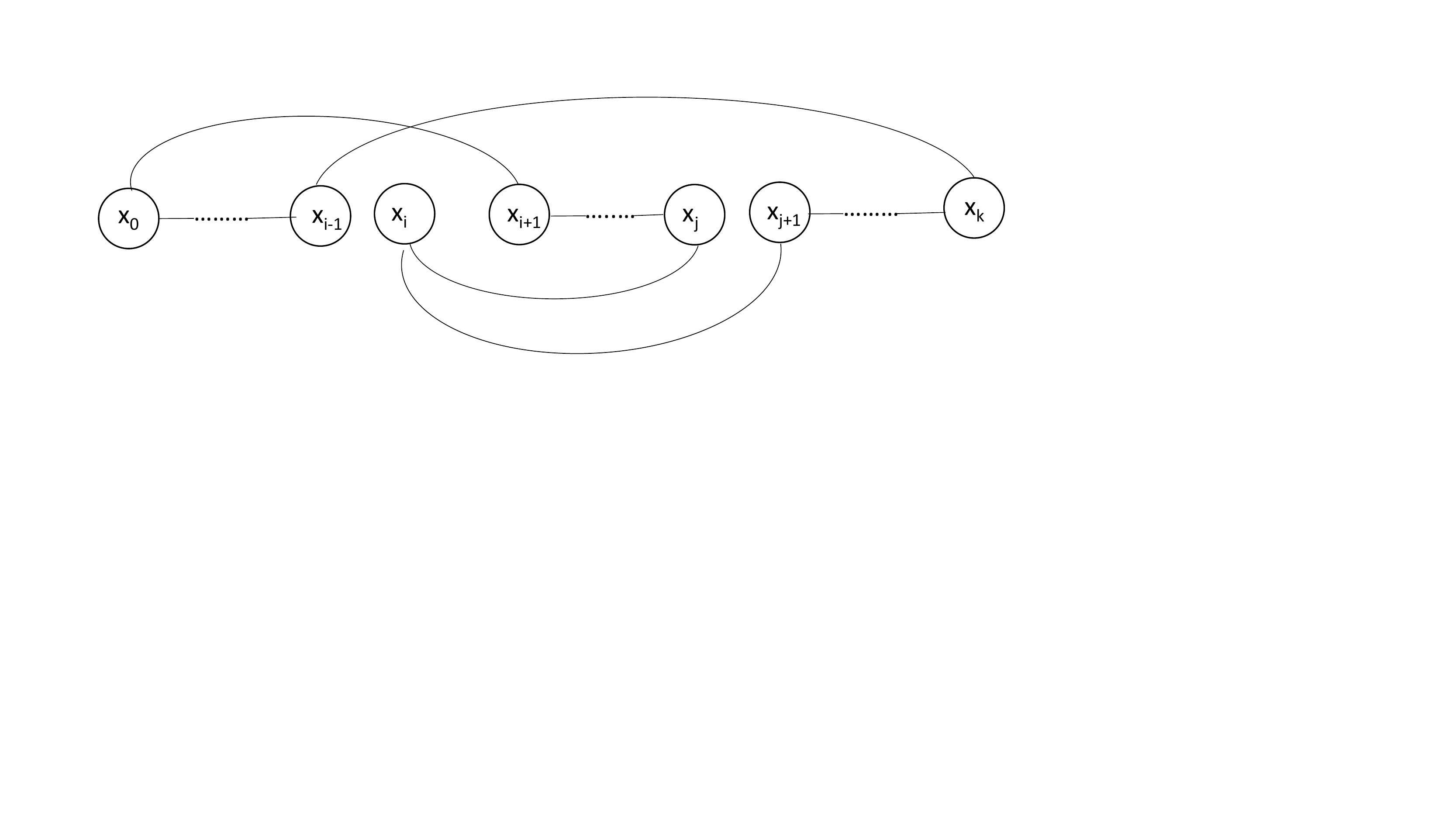}}
  \vspace{-4cm}
  \caption{The cycles detected by $\makeCycleB$}{\label{fig:TypeBCycle}
\end{figure}
\begin{figure}[H]
  \centering
  \includegraphics[scale=0.4]{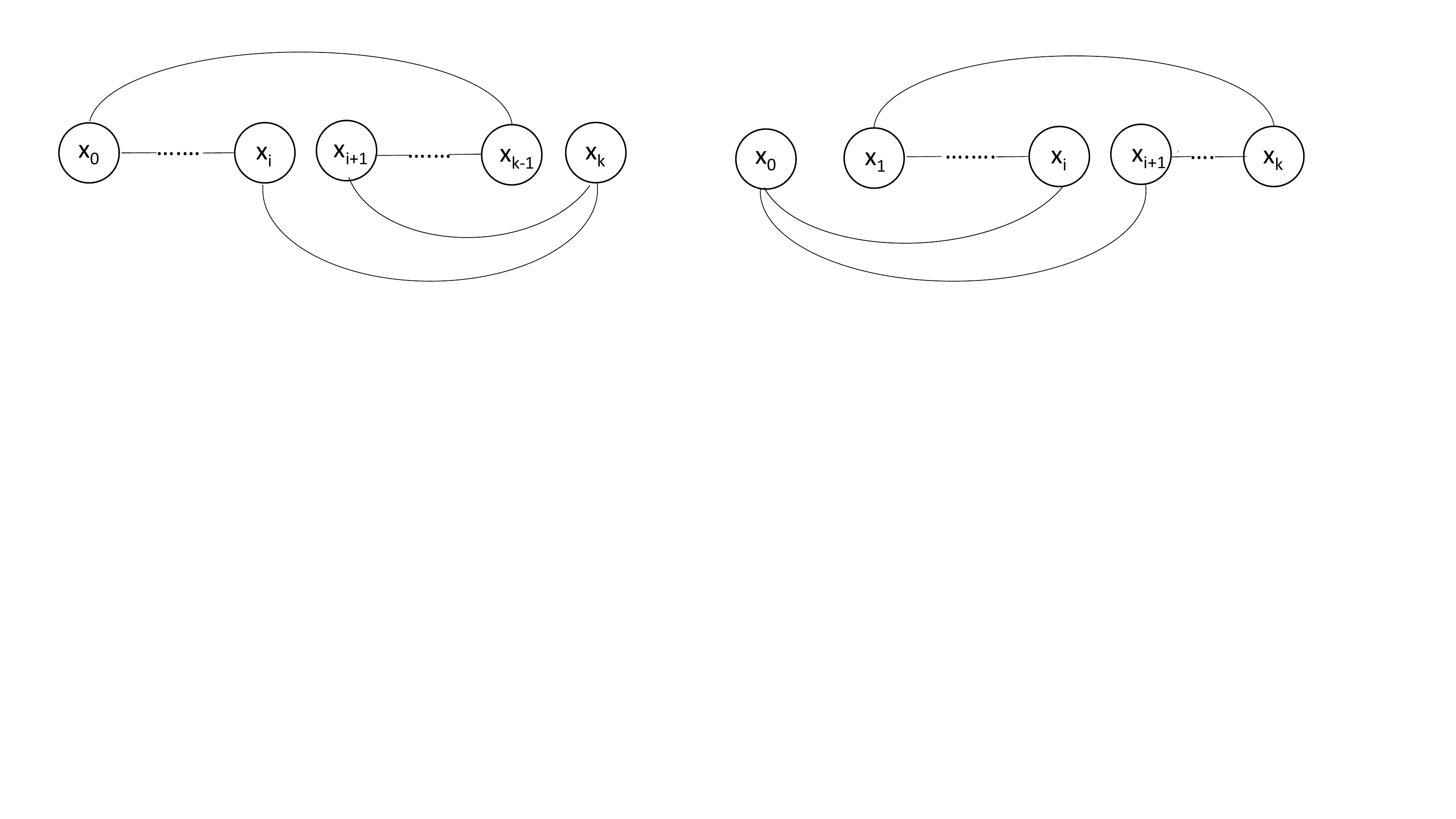}}
  \vspace{-5cm}
  \caption{The cycles detected by $\makeCycleC$}{\label{fig:TypeCCycle}
\end{figure}
\begin{proof}
Let $P=x_0x_1\ldots x_k$ be a maximal path of $G$ for some $k \leq n-1$. Assume that the functions $\makeCycleA$, $\makeCycleB$ and $\makeCycleC$ all return ``NONE". Since $P$ is maximal, $N(x_0),N(x_k) \subseteq V(P)$. Suppose that $x_0 x_k \in E(G)$. Then, setting $i=0$ and $j=k-1$ in function $\makeCycleA$ would detect a cycle. Therefore, $x_0 x_k \notin E(G)$, i.e. $N(x_0),N(x_k) \subseteq V(A)=\set{x_1,\ldots,x_{k-1}}$ where $A$ is the path obtained by deleting the endpoints $x_0$ and $x_k$ of $P$. We partition $V(A)$ by the adjacency of their vertices to $x_0$ and $x_k$. We denote the set of vertices $N(x_0) \setminus N(x_k)$ by $A_{0}$,  $N(x_k) \setminus N(x_0)$ by $A_{k}$, $N(x_0) \cap N(x_k)$ by $A_{0k}$, and the set of vertices $A \setminus (N(x_0) \cup N(x_k))$ by $A_{\overbar{0k}}$. In the sequel we use $a_p$ to denote an arbitrary element of $A_p$ for $p \in \set{0,k, 0k,\overline{0k}}$, and we use regular expression notation for sequences of elements of these sets. In particular, $*$ denotes zero or more repetitions of a pattern.

Suppose that  $x_i \in N(x_k)$ and $x_{i+1} \in N(x_0)$ for some $x_i \in V(A)$. Then, for this value of $i$ and for $j=k-1$, the function $\makeCycleA$ would detect a cycle. We conclude that such a vertex $x_i$ does not exist in $A$. In other words, two consecutive vertices $(x_i, x_{i+1})$ of $A$ do not follow any of the following forbidden patterns: $(a_k,a_0)$, $(a_k,a_{0k})$, $(a_{0k},a_0)$, $(a_{0k},a_{0k})$.

Consider two vertices $x_i, x_j \in A_{0k}$ ($i<j$), with no vertices from $A_{0k}$ between them in $A$. Furthermore, suppose that there are no vertices from $A_{\overbar{0k}}$ between $x_i$ and $x_j$. By these assumptions and due to the forbidden pairs previously mentioned, we have $x_{i+1}  \ldots, x_{j-1} \in A_k$. However, $(x_{j-1}, x_j)$ is also a forbidden pair, contradiction. Therefore, there is at least one vertex from $A_{\overbar{0k}}$ between any two vertices of $A_{0k}$. We conclude that $\abs{A_{\overbar{0k}}} \geq \abs{A_{0k}}-1$. We have
\begin{eqnarray}
n-2 & \geq & k-1 = \abs{A_{0k}}+\abs{A_{k}}+\abs{A_{0}}+\abs{A_{\overbar{0k}}}=(\abs{A_{0k}}+\abs{A_{k}})+(\abs{A_{0k}}+\abs{A_{0}})+
\abs{A_{\overbar{0k}}}-\abs{A_{0k}}\nonumber\\
& \geq & d(x_k)+d(x_0)-1 \geq 2 \delta(G) - 1\nonumber\\
\frac{n-1}{2} & \geq & \delta(G).\nonumber
\end{eqnarray}
Since $\delta(G) \geq \lb \geq \frac{n-1}{2}$, we have $\delta(G) = \frac{n-1}{2}$, and all the inequalities above hold with equality, implying the following:
\begin{enumerate}[a)]
\item {$d(x_0)=d(x_k)=\delta(G)=\frac{n-1}{2}$}, thus $n$ is odd and $\abs{A_{k}} = \abs{A_{0}}$.
\item {$k=n-1$}, thus $V(P)=V(G)$.
\item {$\abs{A_{\overbar{0k}}} = \abs{A_{0k}} - 1$}. There is exactly one vertex of $A_{\overbar{0k}}$ between two consecutive vertices from $A_{0k}$ and there are no other vertices from $A_{\overbar{0k}}$ in $A$.
\end{enumerate}
The vertices between (and including) two consecutive vertices from $A_{0k}$ follow the pattern $(a_{0k} a_k^* a_{\overbar{0k}} a_0^* a_{0k})$. All the vertices before the first vertex from $A_{0k}$ are from $A_{0}$, and all the vertices after the last vertex from $A_{0k}$ are from $A_{k}$. We conclude that $A$ follows the pattern:
\[
a_0^* (a_{0k} a_k^* a_{\overbar{0k}} a_0^*)^* a_{0k} a_k^*
\]

Then, every vertex of $A_{\overbar{0k}}$ is preceded by a neighbor of $x_k$ and followed by a neighbor of $x_0$; in other words, a vertex $x_i \in A_{\overbar{0k}}$ satisfies the condition in Line \ref{lin:TypeBCondition1} of $\makeCycleB$. Since, because our assumption, $\makeCycleB$ does not close a cycle, the condition in Line \ref{lin:TypeBCondition2} is not satisfied for any value of $j$. We conclude that $x_i$ is not adjacent to two consecutive vertices of $A$. Then, the number of neighbours of $x_i$ among $x_1,\ldots,x_{i-1}$ is at most $\lceil \frac{i-1}{2}\rceil$ and the number of neighbours of $x_i$ among $x_{i+1},\ldots,x_{k-1}$ is at most $\lceil \frac{k-1-i}{2}\rceil$. Therefore,
\[
d(x_i) \leq \left\lceil \frac{i-1}{2} \right\rceil + \left\lceil \frac{k-1-i}{2} \right\rceil \leq \frac{i}{2} + \frac{k-i}{2}=\frac{k}{2}=\frac{n-1}{2}=\delta(G).
\]
Since $d(x_i) \geq \delta(G)$, all the inequalities above hold with equality, implying the following:
\begin{enumerate}
\item{Both $i$ and $k$ are even}
\item{$N(x_i)=A_{odd}$ where $A_{odd}=\set{x_1,x_3,\ldots,x_{k-1}}$.}
\end{enumerate}
Since for every $x_i \in A_{\overbar{0k}}$, $x_i$ is even and $N(x_i)=A_{odd}$, we conclude that $A_{\overbar{0k}}$ is an independent set. Recalling that $x_0 x_k \notin E(G)$ and the definition of $A_{\overbar{0k}}$, we conclude that $I = A_{\overbar{0k}} \cup \set{x_0,x_k}$ is an independent set.

We observe in the previous pattern that the set of vertices preceding the neighbours of $x_0$ (i.e. $A_0 \cup A_{0k}$) is $A_0 \cup A_{\overbar{0k}}$, and the set of vertices following the neighbours of $x_k$ (i.e. $A_k \cup A_{0k}$) is $A_k \cup A_{\overbar{0k}}$. Let $x_i$ be a vertex that precedes a neighbour of $x_0$ and let $x_j$ be a vertex that follows a neighbour of $x_k$ with $i<j$. If $x_i x_j \in E$, $\makeCycleA$ can close a cycle since the condition in Line \ref{lin:TypeACondition} is satisfied. Therefore, a pair of adjacent vertices $(x_i, x_j)$ with $i<j$ in $G$ cannot follow one of the following patterns: $(a_{\overbar{0k}},a_{\overbar{0k}})$,$(a_{\overbar{0k}},a_k)$,$(a_0,a_{\overbar{0k}})$,$(a_0,a_k)$. If $\abs{A_{\overbar{0k}}}=0$, then $\abs{A_{0k}}=1$ and $A$ follows the pattern $a_0^* a_{0k} a_k^*$. Since $(a_0,a_k)$ is a forbidden pattern for adjacent vertices, none of the vertices of $A_0$ is adjacent to a vertex in $A_k$. Therefore, the unique vertex $a_{0k}\in A_{0k}$ is a cut vertex of $G$, contradicting our assumption. We conclude that $\abs{A_{\overbar{0k}}} > 0$.

Let $x_i \in A_{\overbar{0k}}$ and $x_j \in A_{odd}=N(x_i)$. If $j < i$ then $x_j \notin A_0$, since otherwise they follow the pattern $(a_0, a_{\overbar{0k}})$ and they are adjacent. Similarly, if $i < j$ then $x_j \notin A_k$. We conclude that, in $A$ all the vertices between two vertices from $A_{\overbar{0k}}$ are from $A_{0k}$. Moreover, all vertices before the first (after the last) vertex from $A_{\overbar{0k}}$ except one vertex from $A_{0k}$ are from $A_k$ (resp. $A_0$). Then $A$ follows the pattern:
\[
a_{0k} a_k^* (a_{\overbar{0k}} a_{0k})^*a_{\overbar{0k}} a_0^* a_{0k}
\]
We now observe that $x_{k-1} \in A_{0k}$, i.e. $x_0 x_{k-1} \in E(G)$. Let $\eta=\abs{A_{k}}=\abs{A_{0}}$. Suppose that $\eta \neq 0$. Then $x_k x_1, x_k x_2 \in E(G)$ and $\makeCycleC$ will close a cycle. Therefore, $\eta=0$, i.e. $A$ follows the pattern:
\[
(a_{0k} a_{\overbar{0k}})^{*} a_{0k}
\]

We conclude that $I$ has $\abs{I}=\frac{n+1}{2}$ vertices, and every vertex of $I$ is adjacent to every vertex of $A_{odd}= A_{0k} = V(G) \setminus I$. Then $I$ is a connected component of $\overbar{G}$.
\end{proof}

\begin{algorithm}[H]
\small
\caption{\alg}\label{alg:FindHamiltonian}
\begin{algorithmic}[1]
\Require {A graph $G$ with $\abs{V(G)}=n$ and $\delta(G) \geq \lb$}
\Ensure {$C$ is a cycle of $G$}

\If {$G$ has a cut vertex} \Return NONE. \EndIf
\State $\bar{G} \gets$ the complement of $G$.
\State $\bar{H} \gets$ the biggest connected component of $\overline{G}$.
\If {$\bar{H}$ is a complete graph and $\abs{V(\bar{H})} \ge \frac{n}{2}$} \Return NONE. \EndIf
\State $P \gets$ a trivial path in $G$.

\Repeat
\While{$P$ is not maximal}
\State Append an edge to $P$.
\Comment{$P$ is a maximal path in $G$.}
\EndWhile

\State $C \gets $ \Call{MakeCycle}{$G,P$}
\If{$C \neq$ NONE and $|V(C)| \neq |V(G)|$}
\State Let $e$ be an edge with exactly one endpoint in $C$.
\State Let $e'$ be an edge of $C$ incident to $e$ \Comment{There are two such edges.}
\State $P \gets C + e - e'$
\EndIf
\Until{$|V(C)|=|V(H)|$ or $C=$ NONE}
\State \Return $C$. \Comment{$C$ is a Hamiltonian cycle of $G$.}
\Statex
\Function{MakeCycle}{$G,P$}
\Require{$P$ is a maximal path in $G$.}
\Ensure{return a cycle $C$ such that $V(C)=V(P)$ or ``NONE"}
\State $C \gets $ \Call{MakeTypeACycle}{$G,P$}
\If {$C \neq$ NONE} \Return $C$. \EndIf
\State $C \gets $ \Call{MakeTypeBCycle}{$G,P$}
\If {$C \neq$ NONE} \Return $C$. \EndIf
\State $C \gets $ \Call{MakeTypeCCycle}{$G,P$}
\If {$C \neq$ NONE} \Return $C$. \EndIf
\State \Return NONE.
\EndFunction
\end{algorithmic}
\end{algorithm}

\begin{algorithm}[H]
\footnotesize
\caption{Making a Type-A Cycle}\label{alg:MakeTypeACycle}
\begin{algorithmic}[1]
\Function{MakeTypeACycle}{$G,P$}
\Require{$P$ is a maximal  path in $G$.}
\Ensure{return a cycle $C$ such that $V(C)=V(P)$ or ``NONE"}
\State Let $P=x_0x_1\ldots x_k$.
\For{$i \in [0,k-3]$}
\For{$j \in [i+2,k-1]$}
\If{$x_0x_{i+1} \in E(G)$ and  $x_ix_{j+1} \in E(G)$ and $x_jx_k \in E(G)$} \label{lin:TypeACondition}
\State \Return $C=(x_0,x_1,\ldots,x_i,x_{j+1},x_{j+2},\ldots,x_k,x_j,x_{j-1}\ldots,x_{i+1},x_0)$.
\EndIf
\EndFor
\EndFor
\State \Return NONE.
\EndFunction
\end{algorithmic}
\end{algorithm}
\begin{algorithm}[H]
\footnotesize
\caption{Making a Type-B Cycle}\label{alg:MakeTypeBCycle}
\begin{algorithmic}[1]
\Function{MakeTypeBCycle}{$G,P$}
\Require{$P$ is a maximal path in $G$.}
\Ensure{return a cycle $C$ such that $V(C)=V(P)$ or ``NONE"}

\State Let $P=x_0x_1\ldots x_k$.
\For{$i \in [k-2]$}
\If{$x_0x_{i+1} \in E(G)$ and $x_{i-1}x_k \in E(G)$} \label{lin:TypeBCondition1}
\For{$j \in [k-1] \setminus \set{i}$}
\If{$x_ix_j \in E(G)$ and  $x_ix_{j+1} \in E(G)$} \label{lin:TypeBCondition2}
\State \Return $C=(x_0,x_1,\ldots,x_{i-1},x_k,x_{k-1},\ldots,x_{j+1},x_i,x_j,x_{j-1}\ldots,x_{i+1},x_0)$.
\EndIf
\EndFor
\EndIf
\EndFor
\State \Return NONE.
\EndFunction
\end{algorithmic}
\end{algorithm}
\begin{algorithm}[H]
\footnotesize
\caption{Making a Type-C Cycle}\label{alg:MakeTypeCCycle}
\begin{algorithmic}[1]
\Function{MakeTypeCCycle}{$G,P$}
\Require{$P$ is a maximal path in $G$.}
\Ensure{return a cycle $C$ such that $V(C)=V(P)$ or ``NONE"}

\State Let $P=x_0x_1\ldots x_k$.
\If{$x_0 x_{k-1} \in E(G)$}
\For{$i \in [0,k-2]$}
\If{$x_k x_i \in E(G)$ and $x_k x_{i+1} \in E(G)$ }
\State \Return $C=(x_0,x_1,\ldots,x_i,x_k,x_{i+1},\ldots,x_{k-1},x_0)$
\EndIf
\EndFor
\EndIf
\If{$x_k x_1 \in E(G)$}
\For{$i \in [k-1]$}
\If{$x_0 x_i \in E(G)$ and $x_0 x_{i+1} \in E(G)$ }
\State \Return $C=(x_1,x_2,\ldots,x_i,x_0,x_{i+1},\ldots, x_k, x_1)$
\EndIf
\EndFor
\EndIf
\State \Return NONE.
\EndFunction
\end{algorithmic}
\end{algorithm}
\comment{We can generalize type A, to paths of any length, not necessarily 3}
\comment{We can generalize type B, by replacing $x_i$ with a path of any length}
\comment{We can generalize type C, by replacing $x_0$ (resp. $x_k$) with a path of any length}

\section{Conclusion}\label{sec:conclusion}

In this work, we presented an extension of the classical Dirac theorem to the case where $\delta(G) \ge \lb$. We identified the only non-Hamiltonian graph families under this minimum degree condition. Our proof is short, simple, constructive, and self-contained. Then, we provided a polynomial-time algorithm that constructs a Hamiltonian cycle, if exists, of a graph $G$ with $\delta(G) \ge \lb$, or determines that the graph is non-Hamiltonian. The proof we present for the algorithm provides insight into the pattern of vertices on Hamiltonian cycles when $\delta(G) \ge \lb$. We believe that this insight will be useful in extending the results of this paper to graphs with lower minimum degrees, i.e., in identifying the exceptional non-Hamiltonian graph families when the minimum degree is smaller and constructing the Hamiltonian cycles, if exists. A natural question to ask in this direction is: What are the exceptional non-Hamiltonian graph families when $\delta(G) \le (n-1)/2$, $\delta(G)\le \lfloor (n-1)/2 \rfloor$ or $\delta(G) \le (n-2)/2$? How can we design an algorithm that not only determines whether a Hamiltonian cycle exists in such a case, but also constructs the Hamiltonian cycle whenever it exists. The investigation of these questions is subject of future work.


\begin{thebibliography}{10}
\expandafter\ifx\csname url\endcsname\relax
  \def\url#1{\texttt{#1}}\fi
\expandafter\ifx\csname urlprefix\endcsname\relax\def\urlprefix{URL }\fi
\expandafter\ifx\csname href\endcsname\relax
  \def\href#1#2{#2} \def\path#1{#1}\fi

\bibitem{Karp72}
R.~M. Karp, Reducibility among Combinatorial Problems, The IBM Research
  Symposia Series, Springer US, 1972.

\bibitem{Dirac52}
G.~A. Dirac, Some theorems on abstract graphs, Proc. London Math. Soc. 2 (1952)
  69--81.

\bibitem{Ore60}
O.~Ore, Note on hamiltonian circuits, Amer. Math. Monthly 65 (1960) 55.

\bibitem{BondyChvatal76}
J.~A. Bondy, V.~Chv\'{a}tal, A method in graph theory, Discrete Mathematics 15
  (1976) 111--135.

\bibitem{NashWilliams66}
J.~A. Nash-Williams, On hamiltonian circuits in finite graphs, Proc. Amer.
  Math. Soc. 317 (1966) 466--447.

\bibitem{NashWilliams71}
J.~A. Nash-Williams, Edge-disjoint Hamiltonian circuits in graphs with vertices
  of large valency, Academic Press, London, 1971.

\bibitem{RahmanKaykobad05}
M.~S. Rahman, M.~Kaykobad, On hamiltonian cycles and hamiltonian paths, Inform.
  Process. Lett. 94 (2005) 37--51.

\bibitem{Mehedy07}
K.~H.~M. Mehedy, L., M.~Kaykobad, An improved degree based condition for
  hamiltonian cycles, Inform. Process. Lett. 102 (2007) 108--112.

\bibitem{Li06}
R.~Li, A new sufficient condition for hamiltonicity of graph, Inform. Process.
  Lett. 98 (2006) 159--161.

\bibitem{LiLi07}
L.~R. Li, Shengjia, J.~Feng, An efficient condition for a graph to be
  hamiltonian, Discrete Applied Math. 155 (2007) 1842--1845.

\bibitem{Jung78}
H.~A. Jung, On maximal circuits in finite graphs, Ann. Discrete Math. 3 (1978)
  129--144.

\bibitem{Bauer89}
M.~A. Bauer, D., E.~Schmeichel, A simple proof of a theorem of jung, Discrete
  Math. 79 (1989/90) 147--152.

\bibitem{Bauer90}
H.~S.~L. Bauer, D., E.~Schmeichel, Recognizing tough graphs are $\nph$,
  Discrete Applied Math. 28 (1990) 191--195.

\bibitem{Bauer02}
B.~H. J. M.~A. Bauer, D., E.~Schmeichel, Polynomial algorithms that prove an
  $\nph$ hypothesis implies an $\nph$ conclusion, Discrete Applied Math. 120
  (2002) 13--23.

\bibitem{Hagg92}
R.~H\"{a}ggkvist, On the structure of non-hamiltonian graphs i, Combin. Probab.
  Comput. 1 (1992) 27--34.

\bibitem{West2000}
D.~B. West, Introduction to Graph Theory, 2nd Edition, Prentice-Hall, 2001.

\end{thebibliography}
\end{document}